\documentclass[sigconf]{acmart}
\usepackage{colortbl}
\usepackage{threeparttable}
\usepackage{multirow}
\usepackage{graphicx,pifont}
\usepackage{xcolor}

\usepackage{amsmath,amssymb,amsfonts}
\usepackage{textcomp}
\usepackage{algorithm}
\usepackage{algorithmicx}
\usepackage{algpseudocode}
\renewcommand\footnotetextcopyrightpermission[1]{}

\usepackage{geometry}
\theoremstyle{definition} 
\newtheorem{definition}{Definition}[section] 
\theoremstyle{plain} 
\newtheorem{theorem}{Theorem}[section] 
\theoremstyle{remark} 

\usepackage{siunitx}
\usepackage{hyperref}
\hypersetup{
    colorlinks=true,
    linkcolor=blue,
    filecolor=magenta,      
    urlcolor=cyan,
    }
\newcommand{\rramdp}{\textit{$\mathsf{RRAM}$-$\mathsf{DP}$}}
\acmConference[ICCAD'2026]{International Conference on Computer-Aided Design}{November 8--12, 2026}{San Jose, CA, USA}

\begin{document}

\title[RRAM-DP: Device-Calibrated Differential Privacy for In-Memory Edge Learning]{
RRAM-DP: Device-Calibrated Differential Privacy for\\
In-Memory Edge Learning
}

\author{\fontsize{10}{8}\selectfont Kwunhang Wong$^{1,2,*}$, Jichang Yang$^{1,2,*}$, Karl M.H. Lai$^{2}$, Hegan Chen$^{1,2}$, Songqi Wang$^{1,2}$, Wei Xuan$^{1}$, \\ Ning Lin$^{2,4,\dag}$, Han Wang$^{2}$, Xiaojuan Qi$^{2,\dag}$ and Zhongrui Wang$^{1,3,\dag}$}

\affiliation{%
  \institution{\fontsize{8}{8}\selectfont$^1$ACCESS – AI Chip Center for Emerging Smart Systems, InnoHK Centers, Hong Kong Science Park, Hong Kong; $^2$Department of Electrical and Computer Engineering, The University of Hong Kong, Hong Kong; $^3$School of Microelectronics, Southern University of Science and Technology, Shenzhen, China; $^4$Department of Electrical Engineering, City University of Hong Kong, Hong Kong
  \\ $^*$Equal contribution to this work; $^\dag$Corresponding authors: linning@hku.hk; xjqi@eee.hku.hk; wangzr@sustech.edu.cn}
  \country{}
}

\renewcommand{\shortauthors}{K. Wong et al.}

\begin{abstract}
Edge Artificial Intelligence of Things (AIoT) systems often collect sensitive data \textit{in situ}, raising serious privacy concerns. Resistive-switching random-access memory (RRAM) is an attractive substrate for efficient AIoT thanks to its multi-bit storage and compute-in-memory (CiM) capabilities, while its inherently stochastic write behavior provides a natural source of randomness that can be leveraged for differential privacy (DP) protection. Yet how to transform this device-level randomness-- typically viewed as detrimental to accuracy-- into a principled randomized mechanism while preserving model utility remains underexplored.
We propose $\rramdp$, a hardware–algorithm co-design that relaxes RRAM write–verify operations to inject calibrated noise for inherently ($\varepsilon,\delta$)-DP with formal DP analysis; together with pretraining techniques, it renders a novel private, high-utility CiM-training paradigm. On CIFAR-10/100, STS-B, and SST-2, $\rramdp$-SGD incurs at best only a 3.8\% accuracy drop at ($\varepsilon$=2,$\delta$=O(1/n))-DP relative to non-private SGD. At the same privacy level, $\rramdp$-SGD delivers up to 57$\times$\& 3.2$\times$ energy savings and 2.7$\times$\& 1.8$\times$ speedups over A100 and DiVa-GEMM, respectively. These results point towards efficient, privacy-preserving in-memory training on RRAM at the edge.
\end{abstract}

\keywords{RRAM, Memristor, AI accelerator, Compute-in-Memory, Differential Privacy, Edge AIoT, Device Variability}

\begin{CCSXML}
<ccs2012>
   <concept>
       <concept_id>10010583.10010786.10010809</concept_id>
       <concept_desc>Hardware~Memory and dense storage</concept_desc>
       <concept_significance>500</concept_significance>
       </concept>
   <concept>
       <concept_id>10002978.10002986</concept_id>
       <concept_desc>Security and privacy~Formal methods and theory of security</concept_desc>
       <concept_significance>500</concept_significance>
       </concept>
 </ccs2012>
\end{CCSXML}

\ccsdesc[500]{Hardware~Memory and dense storage}
\ccsdesc[500]{Security and privacy~Formal methods and theory of security}

\maketitle
\section{Introduction}
Edge Artificial Intelligence of Things (AIoT) systems increasingly operate \textit{in situ} sensing and learning from local, privacy-sensitive environments to personalize services across modalities and tasks~\cite{AIoT}. This shift heightens well-documented risks: seemingly anonymized records can be re-identified~\cite{linkage_attack01}, and modern deep neural networks (DNNs) memorize and leak training data via membership inference attacks (MIAs)~\cite{MIA01} and training data extraction attacks~\cite{data_extract_LLM,data_extract_Diffusion}. 

Differential privacy (DP) provides a principled defense by adding calibrated noise to query outputs, masking any single individual’s contribution and thereby offering rigorous protection against database-level attacks~\cite{privacybook}. DP-Stochastic Gradient Descent (SGD) extends DP guarantees to integrating noise injection in SGD optimizer, enabling privately trained DNNs with $(\varepsilon, \delta)$-DP~\cite{dpsgd}. However, despite their widespread use, existing DP and DP-SGD frameworks still face critical challenges in AIoT applications as detailed below.

To date, most DP mechanisms are realized in \emph{digital} hardware-- namely CPUs/GPUs/TPUs and dedicated digital accelerators~\cite{Opacus,diva}-- which raises several tensions intersected at efficiency and privacy for edge AIoT: (i) pseudo-random noise sampling in digital platform has finite precision with rounding errors, which is susceptible to floating-point error attacks \cite{floatingpt_attack01,floatingpt_attack02}; (ii) repeated data movement between processors and memory for real-time randomized mechanism are energy-expensive in local DP~\cite{LDP01} (iii) existing DP-SGD implementation libraries suffer $2\times -1000\times$ in time and space complexity compared to normal SGD training~\cite{bookkeeping_clip}, which are especially problematic in resource constrained edge.


\begin{figure}
    \centering
    {\includegraphics[width=1.0\columnwidth]{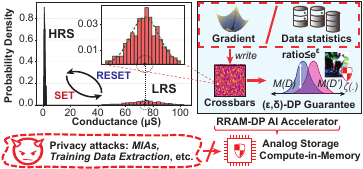}} \\
    \vspace{-8pt} 
    \caption{(LEFT) Histogram of cycle-to-cycle variation for an RRAM device between high resistance state (HRS) and low resistance state (LRS) with 300 iterative SET-RESET at fixed operation voltages. (RIGHT) Such noise can provide $(\varepsilon,\delta)$-DP for storing sensitive data statistics/model parameters.}%
    \label{fig:device}%
       \vspace{-10pt}  
\end{figure}

Resistive-switching random-access memory (RRAM) is promising solution to address these challenges: First, RRAM's inherently stochastic programming offers a native source of entropy for true random number generation. Second, write operations of RRAM, originally regarded as a reliability problem, are naturally an in-place randomization with \emph{\textbf{analog storage}} capability~\cite{RRAM_01}. Thirdly, the ohmic nature of RRAM naturally allows efficient multiply-accumulates (MACs) computation, where RRAM-based Compute-in-memory (CiM) accelerator provides powerful \emph{\textbf{in-memory vector-matrix multiplication}}. Prior DP work explored memristor noise largely at a theoretical level, under the assumption of idealized i.i.d. Gaussian perturbations~\cite{Memristor_DP}. In practice, RRAM noise is non-identical, and practically difficult to be controlled across devices and cycles for different privacy levels--whether how device-realistic randomness can be \emph{calibrated} to deliver formal DP guarantees~\cite{privacybook} remains open. In addition, CiM-training usually underperforms digital accelerators due to noise accumulating in write operations~\cite{ielmini2025resistive}. A scalable CiM-training paradigm is another challenge. 

To this end, we introduce $\rramdp$, which is a co-design that quantifies device randomness into privacy levels, along with pretraining techniques to boost CiM-training utility. To our knowledge, we propose the first RRAM-based randomized mechanism with $(\varepsilon, \delta)$-DP for both on-device data storage and SGD executed directly in RRAM through formal $\mu$-Gaussian DP ($\mu$-GDP) analysis grounded in \emph{\textbf{device statistics}}. For hardware, we realize a robust noise source at program time by relaxing write-verify cycles with only a single control variable $\tau$. The overview of $\rramdp$ is shown in Fig.~\ref{fig:device} and we summarize our contributions as follows:

\begin{itemize}

\item \textbf{First RRAM-based randomized mechanism.} Contrary to digital hardware with finite randomness and efficiency issues, we calibrate the privacy level of RRAM write operations for \textit{in situ} true noise addition $(\varepsilon, \delta)$-DP mechanisms, namely $\rramdp$ mechanism and $\rramdp$-SGD.

\item \textbf{A robust DP noise sampling from RRAM.} Challenges such as non-linear filament growth and fabrication non-idealities across devices are addressed by write-verify sampling and Lindeberg-Central Limit Theorem (CLT) design, calibrating steady noise with only one parameter.

\item \textbf{Noise-resilient CiM-training paradigm.} To mitigate the reliability degradation of noisy write operation, we connect pretraining techniques inspired by DP-SGD to \emph{in situ} RRAM-based CiM-training. Results show $\rramdp$-SGD incurs $\leq\sim$7\% accuracy drop at $\varepsilon$=5 compared to ideal training.


\item \textbf{State-of-the-art efficiency of DP-SGD accelerator.} Energy expensive DP operations can be locally processed and \emph{\textbf{stored}} by $\rramdp$-SGD accelerator to prevent privacy leakage. Simulations show a 54$\times$-57$\times$ and 3$\times$-3.2$\times$ energy savings; 2.5$\times$-2.7$\times$ and 1.7$\times$-1.8$\times$ speedup compared to A100 and digital DP-SGD accelerator DiVa-GEMM~\cite{diva}.

\end{itemize}

\section{Preliminaries}
\textbf{Problem Statement.} When AIoT parameters are not duly handled by privacy-preserving algorithms, an adversary can infer useful information through various privacy attacks. The least revealing attack is MIAs, which formulate a binary hypothesis test to decide if a datapoint exists in the training database~\cite{MIA01, MIA_database}. Model inversion attacks infer features that characterize each output class~\cite{Model_inversion}, making it possible to reconstruct training data, for example, if all class members depict the same person in a facial recognition model. The worst revealing attack is data extraction attack; generative models such as large language models (LLMs)~\cite{data_extract_LLM} and diffusion~\cite{data_extract_Diffusion} can memorize training data from which adversarially crafted prompts can expose individual’s bank account number, images, etc. These privacy attacks call for privacy-preserving techniques, namely randomized DP mechanisms, to protect privacy-sensitive edge parameters.

\subsection{\textbf{Standard RRAM cell}}
RRAM uses analog conductance $G$ to denote a real number $\theta$ via a mapping function $g:[\theta_{\min},\theta_{\max}]\rightarrow$$[G_{\min},G_{\max}]$. An example mapping function could be: 
\begin{equation}
  g(\theta) = G_{\min} + (\theta-\theta_{\min})\cdot\frac{G_{range}}{\theta_{range}}
 \label{eq:projection}
\end{equation}
where $G_{range}=G_{\max}-G_{\min}$, $\theta_{range}=\theta_{\max}-\theta_{\min}$.

However, writing a target conductance to RRAM suffers from inevitable randomness due to the underlying electrochemical ion migration \cite{rram_noise}. More specifically, there exist cycle-to-cycle variation and device-to-device variation that constitute a complex and dynamic RRAM noise model on a crossbar discussed in sec.3.

\textbf{Motivations.} We address the challenges of non-ideal noisy RRAM write by experimentally measured variability, and propose a principled method for sampling and calibrating device-realistic noise to satisfy the formal definition of DP in this work. By linking device-level statistics to privacy parameters, our approach converts unavoidable hardware non-idealities into controlled privacy guarantees without architectural modification.

\subsection{\textbf{Differential Privacy}}
DP is a robust privacy guarantee (usually $\varepsilon\leq$10~\cite{choose_epsi}) to protect all records in a database by estimating the worst-case output change from the inclusion or exclusion of one record in the database. The technical formulation of DP is as follows.
\begin{definition}[$(\varepsilon, \delta)$-Differential Privacy\cite{privacybook}] \label{def:DP}
Given a randomized mechanism $\mathcal{M}:\mathcal{D}\to \mathcal{O}$ is $(\varepsilon, \delta)$-DP if for any neighbouring dataset pairs $D$, $D'\in \mathcal{D}$ from the input domain, and for any measurable output subset $S \subseteq \mathcal{O}$:
\[
\Pr[\mathcal{M}(D) \in S] \leq e^\varepsilon \Pr[\mathcal{M}(D') \in S] + \delta
\]
\end{definition}
If $\delta = 0$, $\mathcal{M}$ satisfies $\varepsilon$-DP. The parameter $\varepsilon$ measures the maximum privacy loss of $\mathcal{M}$. While $\delta$ allows for a small probability of $\varepsilon$ failure, $\delta$ should be smaller than $1/|D|$ to prevent privacy breach for tail examples~\cite{dpsgd,privacybook}. This DP guarantee cannot be weakened by analyzing the DP mechanism output unless there exists additional access to the original data.

\begin{definition}[$\Delta(\zeta)$ Sensitivity\cite{sensitivity}] \label{def:sens}
Let $\zeta:\mathcal{D}\rightarrow \mathcal{O}^{d}$ be any $d$-dimensional query function. The $L_2$ sensitivity of function $\zeta$ for any neighbouring $D, D'\in \mathcal{D}$ can be defined:
\[
\Delta(\zeta)\triangleq \max_{D, D'} \|\zeta(D) - \zeta(D')\|_2 
\]
\end{definition}

\subsection{\textbf{Gaussian Differential Privacy}}
GDP has the tightest known bounds on composition for mechanisms that closely align with Gaussian noise, such as DP-SGD \cite{gaussDP_Deep}. Since hardware RRAM noise modeling is inherently noisy and dynamic compared to software DP deployment, we choose $\mu$-GDP as an analytical tool for tightest privacy estimation of calibrated RRAM noise. By definition, $\mu$-GDP implies a lossless conversion towards a collection of $(\varepsilon, \delta)$-DP guarantees.
\begin{definition}[Trade-off Function $T$~\cite{gaussDP}]
Let $\phi$ be a measurable rejection rule for distinguishing distributions $\mathcal{P}$ and $\mathcal{Q}$ on the same space; $\alpha_\phi$ and $\beta_\phi$ be the probabilities of an adversary making Type I and Type II errors, respectively. The trade-off function $T_{\mathcal{P}||\mathcal{Q}}(\alpha):[0,1]\rightarrow [0,1]$ is defined as:
\[
T_{\mathcal{P}||\mathcal{Q}}(\alpha)\triangleq inf\{\beta_{\phi}:\alpha_{\phi}\leq\alpha\}
\]
\end{definition}
The infimum is taken over all measurable rejection rules $\phi$. The function $T_{\mathcal{P}||\mathcal{Q}}$ characterizes the optimal trade-off between Type I and Type II errors for distinguishing $\mathcal{P}$ from $\mathcal{Q}$.

\begin{definition}[$\mu$-Gaussian Differential Privacy~\cite{gaussDP}]\label{def:Neyman_Pearson}
Given a randomized mechanism $\mathcal{M}$ is $\mu$-GDP (also known as $f$-DP, where $T(\mathcal{M}(D),\mathcal{M}(D')) \geq f$, $f$ is gaussian-parameterized) if for any neighbouring dataset pairs $D$, $D'\in \mathcal{D}$:
\[
T(\mathcal{M}(D),\mathcal{M}(D')) \geq T(\mathcal{N}(0,1),\mathcal{N}(\mu, 1))=G_\mu
\]

The exact solution of $G_\mu(\alpha)=\Phi(\Phi^{-1}(1-\alpha)-\mu)$ is derived by the Neyman–Pearson lemma \cite{Neyman_Pearson}, which states the most powerful test of differentiating two simple hypotheses is a likelihood ratio test. $\Phi$ denotes the cumulative density function of the Gaussian distribution, which applies to sec. 3.3. 
\end{definition}

\begin{theorem}[$(\varepsilon, \delta)$-DP Conversion\cite{gaussDP}]
For $\varepsilon \geq 0$, a randomized mechanism $\mathcal{M}$ is $\mu$-GDP if and only if $\mathcal{M}$ is $(\varepsilon, \delta)$-DP that follows: 
\[
\delta(\varepsilon) \triangleq \Phi(-\frac{\varepsilon}{\mu}+\frac{\mu}{2})-e^{\varepsilon}\Phi(-\frac{\varepsilon}{\mu}-\frac{\mu}{2})
\]\label{thm:conversion}
\end{theorem}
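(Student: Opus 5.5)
The plan is to follow the standard route of Dong, Roth and Su: view $(\varepsilon,\delta)$-DP as a statement about trade-off functions, then compute the smallest $\delta$ for which the Gaussian trade-off function $G_\mu$ dominates the piecewise-linear $(\varepsilon,\delta)$ trade-off function. The statement should be read as saying that $\mathcal{M}$ is $\mu$-GDP if and only if it is $(\varepsilon,\delta(\varepsilon))$-DP for \emph{all} $\varepsilon\ge 0$ simultaneously.

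\textbf{Step 1: rewrite $(\varepsilon,\delta)$-DP as a trade-off condition.} Definition~\ref{def:DP} is equivalent to
\[
T(\mathcal{M}(D),\mathcal{M}(D'))\ge f_{\varepsilon,\delta},\qquad f_{\varepsilon,\delta}(\alpha)=\max\{0,\,1-\delta-e^{\varepsilon}\alpha,\,e^{-\varepsilon}(1-\delta-\alpha)\}.
\]
To see this, take any rejection region $S$. Its type I error is $\alpha=\Pr[\mathcal{M}(D)\in S]$ and its type II error is $\beta=1-\Pr[\mathcal{M}(D')\in S]$. The DP inequality, applied in both directions of the neighbouring pair, gives exactly the two linear constraints. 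Randomized tests are handled by linearity.

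\textbf{Step 2: reduce to the envelope of $G_\mu$.} Each $f_{\varepsilon,\delta}$ is convex and symmetric. Its supporting-line structure shows that a symmetric convex trade-off function $f$ satisfies $f\ge f_{\varepsilon,\delta}$ for all $\varepsilon\ge0$ with $\delta=\delta_f(\varepsilon)$ if and only if $f$ equals the supremum of these polygons. The right choice is
\[
\delta_f(\varepsilon)=1+f^*(-e^{\varepsilon}),
\]
where $f^*$ is the convex conjugate of $f$.

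For the forward direction, suppose $T\ge G_\mu$. Then $T\ge G_\mu\ge f_{\varepsilon,\delta_{G_\mu}(\varepsilon)}$, since a convex function lies above its tangent lines.

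For the converse, suppose $T\ge f_{\varepsilon,\delta(\varepsilon)}$ for every $\varepsilon$. Then $T\ge\sup_\varepsilon f_{\varepsilon,\delta(\varepsilon)}$. This supremum recovers $G_\mu$ because $G_\mu$ is closed, convex and symmetric, and it is the supremum of its supporting lines of slope $-e^{\varepsilon}$ and their mirrored counterparts (Fenchel–Moreau).

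\textbf{Step 3: compute $\delta$ explicitly.} We need
\[
1+G_\mu^*(-e^{\varepsilon})=1-\inf_\alpha\{G_\mu(\alpha)+e^{\varepsilon}\alpha\}.
\]
Work in the Neyman–Pearson parametrization: $\alpha=\Phi(-t)$ and $G_\mu(\alpha)=\Phi(t-\mu)$ (threshold test at $t$). The first-order condition sets the slope $-\varphi(t-\mu)/\varphi(t)$ equal to $-e^{\varepsilon}$. This gives $\mu t-\mu^2/2=\varepsilon$, so $t=\varepsilon/\mu+\mu/2$. Substituting,
\[
\delta=1-\Phi\!\left(\frac{\varepsilon}{\mu}-\frac{\mu}{2}\right)-e^{\varepsilon}\,\Phi\!\left(-\frac{\varepsilon}{\mu}-\frac{\mu}{2}\right)=\Phi\!\left(-\frac{\varepsilon}{\mu}+\frac{\mu}{2}\right)-e^{\varepsilon}\,\Phi\!\left(-\frac{\varepsilon}{\mu}-\frac{\mu}{2}\right),
\]
which is the formula in the statement. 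Convexity of $G_\mu$ guarantees that this stationary point is the global minimizer.

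\textbf{Main obstacle.} The delicate step is the converse direction in Step 2: showing that the family $\{f_{\varepsilon,\delta(\varepsilon)}\}_{\varepsilon\ge0}$ has supremum exactly $G_\mu$, and not something strictly smaller. I would handle it by showing that every $\alpha\in(0,1)$ is a tangency point for some $\varepsilon\ge0$. For $\alpha$ below the symmetric point, use the tangent of slope $-e^{\varepsilon}$ from Step 3. For $\alpha$ above it, use the mirrored branch $e^{-\varepsilon}(1-\delta-\alpha)$, relying on the symmetry of $G_\mu$. The fact that the slope $-\varphi(t-\mu)/\varphi(t)$ ranges continuously over $(-\infty,0)$ as $t$ varies ensures that all such slopes are attained.
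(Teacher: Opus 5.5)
The paper gives no proof of its own and simply cites Dong, Roth and Su. Your proposal is correct and follows essentially their argument: the trade-off characterisation of $(\varepsilon,\delta)$-DP, then the primal--dual equivalence $T\ge f \iff T\ge f_{\varepsilon,1+f^*(-e^{\varepsilon})}$ for all $\varepsilon\ge 0$ for symmetric $f$, then the explicit conjugate computation for $G_\mu$ at the tangency point $t=\varepsilon/\mu+\mu/2$. Your reading of the statement as holding for all $\varepsilon\ge 0$ simultaneously is the correct one, since for a single fixed $\varepsilon$ the ``if'' direction would be false.
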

\vspace{-10pt} 
We denote satisfying $\mu$-GDP is the same as satisfying ($\varepsilon,\delta$)-DP in the following paragraphs for simplicity.

\section{RRAM-DP Formulation}
\textbf{Protection Overview.} 
$\rramdp$ is a co-design framework formulated under an RRAM-based randomized mechanism at one output parameter level to satisfy $\mu$-GDP. Firstly, a $k$-device configuration of RRAM crossbars is designed in which every analog output parameter is programmed on $k$ devices with shared write control and read-averaging circuitry. Secondly, the cycle-to-cycle variation is calibrated by the convergence of a write-verify algorithm at the target conductance within an adjustable absolute error bound $\tau$. For statistics storage, write operations inject noise with $\rramdp$ mechanism to provide DP guarantee. For DNNs, write operations in $\rramdp$-SGD provides DP guarantee for trained models. 


\subsection{Additive Noise Strategy} \label{Sec 3A}
To calibrate noise, this work follows a classic write-verify method that writes RRAM iteratively with a feedback loop control \cite{rram_program}. As described in Fig. \ref{fig:additive_noise_0}, this method reads an RRAM conductance in the current cycle and proceeds to the next SET/RESET cycle until hitting within stopping threshold. This method can steadily capture variation due to a bounded sampling region, 
disregarding overshooting/undershooting problems from the non-linear filamentary growth~\cite{ielmini2025resistive}. The SET/RESET program voltage adjustment rules vary based on device characteristics heuristically. 

\textbf{Cycle-to-cycle Variation.}
Assume an absolute conductance error $\Delta G=G_w-G_t$ is collected from differencing written conductance $G_w$ and target conductance $G_t$. We first denote that there are $N$ devices for storing $N$ parameters. Then, the final cycle-to-cycle variation of $\Delta G_i$ for each device $i$ sampled at the converged write-verify algorithm follows an observed distribution Dist$(\tau,G_t)$, which is dependent on the algorithmic stopping threshold $\tau$ and target conductance $G_t=g(\theta_i)$:
\vspace{-3pt} 
\begin{equation}
    \Delta G_i = \Delta \theta_i \cdot \frac{ G_{range}}{\theta_{range}}, \text{ }i=1,...,N
 \label{eq:error}
 \vspace{-5pt} 
\end{equation}
\[
   given \text{ } \Delta G_i\sim \text{Dist}(\tau,G_t),\text{ } \forall \tau \in \mathbb{R}^+,\text{ }G_{t}\in[G_{\min},G_{\max}]
\]

Eq.~\ref{eq:error} maps the program error of an RRAM device that contributes to numerical error by $G_{range}:\theta_{range}$ projection rule stated in Eq.~\ref{eq:projection}. As the program voltage aims exactly towards $G_t$, this usually gives a favourable DP noise source for $\mathbb{E}[\Delta G_i]\approx 0$. This work will prove the relaxation of the $G_t$ condition by our experiments afterwards for a simple yet effective noise calibration. Then, we can express the precise variance control only by $\tau$ for $(\varepsilon,\delta)$-DP requirement. Denoting the parameter error $\Delta\theta$ as a random variable in Eq. \ref{eq:var_error},
\begin{equation}
  \sigma_i(\Delta\theta) =\sigma_i(\Delta G)\cdot\frac{\theta_{range}}{G_{range}}, \text{ }i=1,...,N
 \label{eq:var_error}
\end{equation}

The standard deviation function $\sigma_i(.)$ on $\Delta\theta$ implies the upper bound of RRAM-controlled parameter error can go unbounded as the hyperparameter $\theta_{range}$ increases or $G_{range}$ decreases. As Fig. \ref{fig:additive_noise_0} shows, more cycles are required for the convergence towards a smaller stopping threshold with smaller variance. This suggests the challenge of device-calibrated $\rramdp$ lies in noise minimization rather than maximization. 
\begin{figure}
    \centering
    {\includegraphics[width=1.0\columnwidth]{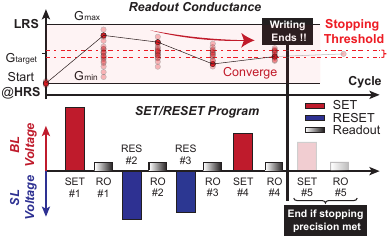}} \\
       \vspace{-10pt}  
    \caption{Control flow diagram for a classic RRAM write-verify algorithm. Voltage pulses into bit lines (BLs)/source lines (SLs) control SET/RESET, respectively.}
    \label{fig:additive_noise_0}%
       \vspace{-15pt}  
\end{figure}

\textbf{Device-to-device Variation.}
The non-ideal fabrication process, such as etching-induced damage and thickness of switching film~\cite{ielmini2025resistive}, causes each RRAM device to have a non-identical stochastic distribution of $\Delta G_i$$\sim$$\text{Dist}(\tau,G_t)$. Ideally, the cycle-to-cycle variation for every individual device should be independently modelled to ensure necessary noise is added without violating a strict $(\varepsilon,\delta)$-DP guarantee. Since auditing all devices is non-trivial, notably in very large crossbars, this work develops $\rramdp$ mechanism for device-to-device noise modeling relaxation below.

\subsection{Non-ideal Noise with Multi-device Representations}
\subsubsection{Smoothen Device-to-device Spread}
We denote $k\times N$ devices for $N$ parameters analog storage as $\Delta \widetilde{G}_i$$\sim$$\text{Dist}(\tau,G_t)$ to address device-to-device variation. Increasing $k$ reduces spread and smoothens outlier RRAM devices for a steady, consistent noise source at the cost of hardware.

\subsubsection{Unify Device-to-device Gaussian Shape} 
A $k$-device configuration design with averaged sum output satisfies the Lindeberg-CLT as $k$$\rightarrow $$\infty$~\cite{LindCLT}, as summing any independent random variable with bounded distribution and variance eventually converges to Gaussian. We formulate $\rramdp$ randomized mechanism for the statistical behaviour of a global non-uniform Gaussian write below:

\begin{theorem}
For some dataset $D\in \mathcal{D}$, we define an arbitrary query function $\zeta:\mathcal{D}\rightarrow\mathcal{O}^d$ and a sufficiently large k-device representation to write one parameter $\theta_i\in \zeta(D)$. Then, there exists an in situ $\rramdp$ mechanism $\mathcal{R}$ on the averaged sum output that satisfies $\mu_r$-GDP where: 
\label{mythm:rram_dp}
\begin{equation}
\mathcal{R}(D)\triangleq \zeta (D) + \mathcal{N}(0,\operatorname{diag}(\sigma_i(\Delta \theta)^2)) 
\label{eq:rram_dp}
\end{equation}
\[
\text{such that } \sigma_i(\Delta \widetilde{G})  \geq \frac{\Delta(\zeta)}{\mu_r} \cdot \frac{G_{\mathrm{range}}}{\theta_{\mathrm{range}}},\text{ } \forall i=\{1,2,...,N\}
\] 
\end{theorem}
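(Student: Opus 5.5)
The proof has two parts. First, show that the $k$-device averaged write realizes the Gaussian mechanism of Eq.~\ref{eq:rram_dp}. Second, apply the standard $\mu$-GDP guarantee of the Gaussian mechanism with heteroscedastic (diagonal) noise.

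\textbf{Step 1: the averaged write is a Gaussian perturbation.} For parameter $\theta_i$, the $k$ devices are written independently with the same write-verify stopping threshold $\tau$. The errors $\Delta G_{i,1},\dots,\Delta G_{i,k}$ are independent, each drawn from its own device's $\text{Dist}(\tau,G_t)$ with $\mathbb{E}[\Delta G_{i,j}]\approx 0$. Each error is bounded, since the write-verify loop stops only when $|\Delta G|\le\tau$, so the Lindeberg condition holds trivially: $\max_j|\Delta G_{i,j}|\le\tau$ while $s_k^2=\sum_j\operatorname{Var}(\Delta G_{i,j})\to\infty$. The Lindeberg--Feller CLT then shows that the normalized sum $\sum_j\Delta G_{i,j}/s_k$ converges to $\mathcal{N}(0,1)$. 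Rescaling, the averaged output has error $\Delta\widetilde G_i$ that is approximately $\mathcal{N}(0,\sigma_i(\Delta\widetilde G)^2)$. Through the affine inverse of the mapping in Eq.~\ref{eq:projection} and Eq.~\ref{eq:var_error}, the stored value is therefore $\theta_i+\mathcal{N}(0,\sigma_i(\Delta\theta)^2)$ with $\sigma_i(\Delta\theta)=\sigma_i(\Delta\widetilde G)\,\theta_{range}/G_{range}$. Because different parameters use disjoint devices, the noises are independent across $i$, which gives Eq.~\ref{eq:rram_dp} with a diagonal covariance.

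\textbf{Step 2: reduce to an isotropic comparison.} By the Neyman--Pearson characterization in Definition~\ref{def:Neyman_Pearson}, fix neighbours $D,D'$ and let $v=\zeta(D)-\zeta(D')$, so that $\|v\|_2\le\Delta(\zeta)$. The trade-off function between $\mathcal{N}(\zeta(D),\Sigma)$ and $\mathcal{N}(\zeta(D'),\Sigma)$ equals $G_{\|\Sigma^{-1/2}v\|_2}$, because whitening by $\Sigma^{-1/2}$ is an invertible map and the problem reduces to testing two isotropic Gaussians, which is one-dimensional along $v$. Letting $\sigma_{\min}=\min_i\sigma_i(\Delta\theta)$, we have
\[
\|\Sigma^{-1/2}v\|_2\le \|v\|_2/\sigma_{\min}\le \Delta(\zeta)/\sigma_{\min}.
\]
The hypothesis $\sigma_i(\Delta\widetilde G)\ge \frac{\Delta(\zeta)}{\mu_r}\frac{G_{range}}{\theta_{range}}$ for all $i$ translates via Eq.~\ref{eq:var_error} into $\sigma_{\min}\ge\Delta(\zeta)/\mu_r$. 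Hence the shift is at most $\mu_r$. Since $G_\mu$ is decreasing in $\mu$, it follows that $T(\mathcal{R}(D),\mathcal{R}(D'))\ge G_{\mu_r}$, which is $\mu_r$-GDP. Theorem~\ref{thm:conversion} then gives the corresponding $(\varepsilon,\delta)$ curve.

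\textbf{Main obstacle.} The delicate step is Step 1. The CLT gives only distributional convergence, while DP needs an exact tail guarantee. I would state the result in the asymptotic, ``sufficiently large $k$'' sense the theorem uses. If a quantitative version is wanted, I would attempt a Berry--Esseen bound, using $\mathbb{E}|\Delta G|^3\le\tau\operatorname{Var}$, to bound the Kolmogorov distance from Gaussian by $O(\tau/s_k)$. That error would then be absorbed as an additive slack into $\delta$.

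A second subtlety is that $\text{Dist}(\tau,G_t)$ depends on $G_t$, and hence on the data. Here I would rely on the paper's empirical relaxation that the variance is essentially independent of $G_t$, so $\sigma_i$ does not leak information. I would also rely on the zero-mean approximation, so that no data-dependent bias shifts the Gaussian.
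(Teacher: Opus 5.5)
Your proposal is correct and is essentially the paper's argument: the paper also just assumes the $k$-device average is exactly independent Gaussian by the CLT. It then computes the Gaussian privacy-loss distribution and bounds $\Delta^{T}\Sigma^{-1}\Delta\le\Delta(\zeta)^{2}\lambda_{\max}(\Sigma^{-1})=\Delta(\zeta)^{2}/\sigma_{\min}^{2}$ by a Rayleigh quotient, which is exactly your whitening bound $\|\Sigma^{-1/2}v\|_2\le\Delta(\zeta)/\sigma_{\min}$.

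One caution on your quantitative aside. Every write error lies in $[-\tau,\tau]$, so the finite-$k$ average has bounded support and is not $\mu$-GDP for any finite $\mu$: a nonzero shift gives a test with Type I error $0$ and Type II error below $1=G_\mu(0)$. Moreover, Berry--Esseen controls only Kolmogorov distance, not the total-variation distance you would need to move the approximation error into $\delta$. So that route could at best give an $(\varepsilon,\delta')$-DP statement, not $\mu_r$-GDP.
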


\begin{proof}
Assume writing $\theta_i$ is strictly independent Gaussian due to CLT and physics randomness, querying $\zeta$ for two neighboring datasets $D\sim D'$ will be two multivariate Gaussian with $P \sim \mathcal{N}(m, \Sigma)$, $Q \sim \mathcal{N}(m', \Sigma)$ for writing all $\theta_i$. By Neyman-Pearson lemma of $\mu$-GDP in definition~\ref{def:Neyman_Pearson}, a likelihood ratio test $\Lambda(x)$ is considered for the privacy loss random variable:
\[
\Lambda(x) = \frac{f_P(x)}{f_Q(x)}=\frac{\text{exp}[-\frac{1}{2}(x-m)^{T}\Sigma^{-1}(x-m)]}{\text{exp}[-\frac{1}{2}(x-m')^{T}\Sigma^{-1}(x-m')]}
\]
\[
=\text{exp}[-\frac{1}{2}(x-m)^{T}\Sigma^{-1}(x-m)+\frac{1}{2}(x-m')^{T}\Sigma^{-1}(x-m')]
\]
\[
=\text{exp}[\Delta^{T}\Sigma^{-1}\cdot(x-\frac{m+m'}{2})] = \text{exp}[\Delta^{T}\Sigma^{-1}\cdot(\frac{\Delta}{2}+z)]
\]
where $\Delta =m-m'$; $x\sim P$, $z\sim\mathcal{N}(0,\Sigma)$.

Denote $\Delta^T\Sigma^{-1}z\sim\mathcal{N}(0,\Delta^T\Sigma^{-1}\Delta)$. The privacy loss of $\rramdp$ mechanism $\mathcal{R}$ can be shown as $\log \Lambda(x)|_ P\sim \mathcal{N}(\frac{1}{2}\Delta^T\Sigma^{-1}\Delta, \Delta^T\Sigma^{-1}\Delta)$. Similarly, $\mu$-GDP has a privacy loss of $\log \Lambda(x)|_{x\sim\mathcal{N}(\mu,1)}\sim\mathcal{N}(\frac{1}{2}\mu^2,\mu^2)$. Therefore, the goal is to identify the maximum $\mu=\sqrt{\Delta^T\Sigma^{-1}\Delta} \text{ } \forall D\sim D'$ such that $||\Delta||_2 \leq\Delta(\zeta)$ to declare $\mu$-GDP for the process. This yields a Rayleigh-quotient maximization problem:
\[
\max_{||\Delta||_2 \leq\Delta(\zeta)} \Delta^T\Sigma^{-1}\Delta= \Delta(\zeta)^2\lambda_{\max}(\Sigma^{-1})
\]
where $\lambda_{\max}(\Sigma^{-1})$ is the largest eigenvalue of $\Sigma^{-1}$.

This gives $\Delta(\zeta)\sqrt{\lambda_{\max}(\Sigma^{-1})}\leq\mu_{r}$ to claim $\mu_r$-GDP. With $\Sigma=\operatorname{diag}(\sigma_i(\Delta \theta)^2)$, $\lambda_{\max}(\Sigma^{-1})=\max _{i}{(1/\sigma_i^2)}={1/\sigma_{\min}^2}$. Therefore, we have $\sigma_{\min}(\Delta\theta)\geq\frac{\Delta(\zeta)}{\mu_r} \Rightarrow \sigma_{\min}(\Delta \tilde{G})\geq
\frac{\Delta(\zeta)}{\mu_r}\cdot\frac{G_{range}}{\theta_{range}}$ by substituting $\lambda_{\max}(\Sigma^{-1})$, which implies the conductance inequality $\forall i$.
\end{proof}

\subsection{RRAM-DP-Stochastic Gradient Descent}
Most RRAM CiM-accelerators only function as inference instead of training due to accumulative write variance during weight update~\cite{ielmini2025resistive, snngx}. Such RRAM write noise has a natural fit with the iterative noise addition in DP-SGD, which protects training data from being released in the privately trained model with $(\varepsilon,\delta)$-DP at the edge. We formulate $\rramdp$-SGD over the composition of $T$ iterative $\rramdp$ mechanism on the gradient function with private subsampling amplification in Algo.~\ref{alg:rramsgd}:
\vspace{-6pt}
\begin{algorithm}
\caption{$\rramdp$-SGD}
\setlength{\floatsep}{4pt}
\setlength{\textfloatsep}{4pt}
\begin{algorithmic}[1]
\State \textbf{Input:} Private examples $D = \{x_j \mid j = 1, 2, \ldots, n\}$,
\Statex Loss function $\mathcal{L}(\theta , x)$. Parameters: learning rate $\eta$, noise model $\sigma'_{\min}$, clip norm $C$, batch size $B$, iterations $T$
\For {$t=0,1,\ldots,T-1$}
    \State \textbf{Random-subsample $D_b$ uniformly}
    \Statex \hspace{1.2em} where $D_b \subset D, \quad D_b = \{x_j \mid j \in I_t\}, \quad |I_t| = B$
    
    \For {each $j \in I_t$}
        \State \textbf{Compute gradient: (RRAM)} 
        \Statex \hspace{2.7em} $g_{x_j} \gets \nabla_\theta \mathcal{L}(\theta_t , x_j)$
        \State \textbf{Clip gradient:} 
        \Statex \hspace{2.7em} $\bar{g}_{x_j} \gets g_{x_j} \cdot \min(1, C / \|g_{x_j}\|_2)$
    \EndFor
    \State \textbf{Average gradient and descend:}
    \Statex \hspace{1.2em} $\bar{\theta}_{t+1} \gets \theta_t - \eta \cdot \frac{1}{B} \sum_{j\in I_t} \bar{g}_{x_j}$
    \State \textbf{Projection ($| \mathbf{\bar{\theta}_{t+1}}|=N$) \& Write: (RRAM)} 
    \Statex \hspace{1.2em} $\sigma_i(\Delta \widetilde{G})\geq  \sigma'_{\min}\cdot\frac{2C\eta}{B}\cdot\frac{G_{range}}{\theta_{range}},\text{ }\forall i=\{1,2,,...,N\}$ 
    \State \textbf{Estimate privacy} $\mu_r$-GDP at iteration $t+1$ \textbf{(Eq. 5)}
\EndFor
\State \textbf{Output:} $\theta_T$ with privacy budget $(\varepsilon,\delta)$-DP conversion 
\end{algorithmic} 
\label{alg:rramsgd}
\end{algorithm}
\vspace{-12pt}
\begin{theorem}
Let $\mathcal{C}_p(\mathcal{R})^{\otimes T}$ denote the $T$-fold composition of \\ $\rramdp$ mechanism $\mathcal{R}$ under uniform subsampling rate $p = \frac{B}{n}$. Define $\sigma_{\min}' \leq \inf_{1 \le t \le T} \sigma_{\min,t}'$ for all iteration $t$. Then, in the asymptotic regime $p\sqrt{T}\rightarrow$ $c$ as $T\rightarrow \infty$, the $\rramdp$-SGD algorithm of $\mathcal{C}_p(\mathcal{R})^{\otimes T}$ with per-step noise multiplier $\operatorname{diag}(\sigma'_{i,t})$ \label{thm:DPSGD} is $\mu_r$-GDP:
\begin{equation}
\mu_{r}= p\cdot\sqrt{2T\cdot (e^{\sigma_{\min}^{'-2}}\cdot\Phi(1.5\sigma_{\min}^{'-1})+3\Phi(-0.5\sigma_{\min}^{'-1})-2)}
\end{equation}
\[
\text{such that } \sigma_i(\Delta \widetilde{G})\geq  \sigma'_{\min}\cdot\frac{2C\eta}{B}\cdot\frac{G_{range}}{\theta_{range}},\text{ }\forall i=\{1,2,\dots,N\}
\]
\end{theorem}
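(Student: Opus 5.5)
The plan is to reduce the statement to the central limit theorem for composition of subsampled Gaussian mechanisms from the GDP framework of \cite{gaussDP,gaussDP_Deep}, with Theorem~\ref{mythm:rram_dp} supplying the per-step guarantee.

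\textbf{Step 1: per-step sensitivity.} Fix an iteration $t$ and neighbouring datasets differing in one record. The query at that step is the averaged clipped gradient step $\zeta_t(D)=\eta\frac{1}{B}\sum_{j\in I_t}\bar g_{x_j}$, written via $\bar\theta_{t+1}$. Under replacement adjacency the $L_2$ sensitivity is at most $\Delta(\zeta_t)\le \frac{2C\eta}{B}$, since each clipped gradient has norm at most $C$. This is the factor appearing in the conductance constraint.

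\textbf{Step 2: reduction to a Gaussian mechanism.} Apply Theorem~\ref{mythm:rram_dp} to $\zeta_t$. The write constraint $\sigma_i(\Delta\widetilde G)\ge\sigma'_{\min}\frac{2C\eta}{B}\frac{G_{range}}{\theta_{range}}$ translates, via Eq.~\ref{eq:var_error}, into $\sigma_i(\Delta\theta)\ge\sigma'_{\min}\Delta(\zeta_t)$ for every coordinate. By the Rayleigh-quotient argument in that proof, the unsubsampled step is then $1/\sigma'_{\min,t}$-GDP. Since $\sigma'_{\min}\le\inf_t\sigma'_{\min,t}$, each step is at least as private as a Gaussian mechanism with noise multiplier $\sigma'_{\min}$. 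Its trade-off function therefore dominates $G_{1/\sigma'_{\min}}$, and I may replace every step by this worst case.

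\textbf{Step 3: subsampling and composition CLT.} Uniform subsampling at rate $p$ turns each step into a mechanism with trade-off function $C_p(G_{1/\sigma'_{\min}})$. For $\sigma'_{\min}$ fixed and $p\sqrt T\to c$, I would invoke the CLT of \cite{gaussDP} for the $T$-fold composition $C_p(G_{1/\sigma})^{\otimes T}$. It states that this composition converges to $G_{\mu}$ with
\[
\mu=c\sqrt{2\left(e^{\sigma^{-2}}\Phi(1.5\sigma^{-1})+3\Phi(-0.5\sigma^{-1})-2\right)}.
\]
Replacing $c$ by $p\sqrt T$ gives exactly the displayed $\mu_r$. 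The post-processing steps (descend, projection, composition with the data-independent previous state) do not degrade GDP, so the released $\theta_T$ inherits $\mu_r$-GDP. Theorem~\ref{thm:conversion} then gives the $(\varepsilon,\delta)$ budget.

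\textbf{Main obstacle.} The delicate point is justifying the uniform worst-case reduction in Step 3 when the per-coordinate and per-step noise are heterogeneous. I need the trade-off ordering to be preserved under subsampling and under composition, so that dominance by $G_{1/\sigma'_{\min}}$ step by step yields dominance of the composed mechanism. I would check this using the monotonicity of $C_p$ and of $\otimes$ in the trade-off order. I would also make explicit that the guarantee is asymptotic, holding in the limit of the CLT rather than as a finite-$T$ bound. Finally, the Gaussianity of the RRAM write rests on the $k$-device Lindeberg-CLT approximation of Theorem~\ref{mythm:rram_dp}, which I take as given.
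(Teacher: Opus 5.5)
Your proposal is correct and follows essentially the same route as the paper: per-step sensitivity $\frac{2C\eta}{B}$, Theorem~\ref{mythm:rram_dp} giving $1/\sigma'_{\min,t}$-GDP per step, a worst-case reduction to $G_{1/\sigma'_{\min}}$, and then the subsampled-composition CLT of Dong et al.\ with $\mu_r = p\sqrt{2T\chi^2_+(G_{1/\sigma'_{\min}})}$. The paper first takes a detour through the non-subsampled bound $G_{\sqrt{\sum_t \mu_t^2}} \succeq G_{\sqrt{T}/\sigma'_{\min}}$ and then only asserts that each subsampled step may be modeled as $G_{1/\sigma'_{\min}}$; your appeal to the monotonicity of $C_p$ and $\otimes$ in the trade-off order is exactly the justification that assertion needs.
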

\vspace{-12pt}
\begin{proof}
In iteration $t$, given lines 6 and 8 for any two neighbouring datasets
$D \sim D'$, the per-iteration gradient update $\zeta(D) = \frac{\eta}{B}\sum_{j\in I_t} \bar{g}_{x_j}$ 
has sensitivity bound $\Delta(\zeta) = \frac{2C\eta}{B}$. In line 9, $\rramdp$-SGD adds calibrated noise $\Delta \theta \sim \mathcal{N}\!\left(0, \Sigma_t\right), \Sigma_t = \operatorname{diag}\bigl((\Delta(\zeta)\,\sigma_{i,t}')^2\bigr)$, so that the $i$-th coordinate noise standard deviation is
$\Delta(\zeta)\,\sigma_{i,t}'$. Let $\sigma_{\min,t}' = \min_i \sigma_{i,t}'$. By theorem~3.1 ($\rramdp$ 
mechanism GDP guarantee), the mechanism at iteration $t$ is $\mu_t$-GDP with $\mu_t = \frac{1}{\sigma_{\min,t}'}$. Since sampling all device true distributions is non-trivial, we model an empirical statistical lower bound function with $\sigma'_{\min}$ (see sec. 4.3). Then for all $t$, we have $\sigma_{\min,t}' \ge \sigma_{\min}' 
\Rightarrow
\mu_t = \frac{1}{\sigma_{\min,t}'} \;\le\; \frac{1}{\sigma_{\min}'}$. We define the actual worst-case per-step GDP parameter $\mu_{\max}$:
\[
\mu_{\max} \triangleq \sup_{1 \le t \le T} \mu_t \;\le\; \frac{1}{\sigma_{\min}'}
\]
Let $G_{\mu}$ denote the trade-off function of a $\mu$-GDP Gaussian
mechanism. The composition rule for GDP ~\cite{gaussDP} states that composing
$\mu_t$-GDP mechanisms gives $ G_{\mu_1} \otimes \cdots \otimes G_{\mu_T} = G_{\sqrt{\mu_1^2 + \cdots + \mu_T^2}}$. 

Since $\mu_t \le \mu_{\max}$ for all $t$,
\[
\sqrt{\sum_{t=1}^T \mu_t^2} 
\;\le\; \sqrt{T \mu_{\max}^2} 
\;=\; \sqrt{T}\,\mu_{\max}
\;\le\; \frac{\sqrt{T}}{\sigma_{\min}'}
\]

Therefore, the $T$-fold (non-subsampled) composition satisfies
\[
G_{\mu_1} \otimes \cdots \otimes G_{\mu_T}
= G_{\sqrt{\sum_{t=1}^T \mu_t^2}}
\succeq G_{\sqrt{T}\,\mu_{\max}}
\succeq G_{\sqrt{T}/\sigma_{\min}'}
\]

Equivalently, we can model each iteration as a
$\mu'$-GDP mechanism with $\mu' = \frac{1}{\sigma_{\min}'}$ as the base mechanism with trade-off function $G_{\mu'}$; thus, the $T$-fold composition 
$\mathcal{C}_p(\mathcal{R})^{\otimes T}$ with privacy amplification convergence through batch subsampling rate $p$ in the asymptotic regime is no worse than $\mu_r = p \sqrt{2T \,\chi_+^2(G_{\mu'})}$ as $p\sqrt{T} \to$ constant $c$, where $\chi_+^2(G_{\mu}) 
= e^{\mu^2}\Phi(3\mu/2) + 3\Phi(-\mu/2) - 2$ under GDP Lemma~5.3~\cite{gaussDP}.
Therefore, substituting $\mu' = \frac{1}{\sigma_{\min}'}$ yields $\mu_{r}= p\cdot\sqrt{2T\cdot (e^{\sigma_{\min}^{'-2}}\cdot\Phi(1.5\sigma_{\min}^{'-1})+3\Phi(-0.5\sigma_{\min}^{'-1})-2)}$.
\end{proof}

In Algo.~\ref{alg:rramsgd}, we accelerate $\rramdp$-SGD in an analog-digital computing fashion, where gradient computation (line 5) and noise generation (line 9) can be directly processed by analog RRAM crossbar. Since RRAM noise occurs at parameter level instead of gradient, we reverse the normal gradient descent compute logic of DP-SGD in lines 8 and 9, which causes the privatised gradient sensitivity $\Delta(\zeta)$ to scale with learning rate $\eta$ and batch size $B$. Lastly, we propose an $\rramdp$-SGD accountant (Eq. 5) in theorem~\ref{thm:DPSGD} to rigorously track device non-idealities with $\sigma'_{\min}$ after training iteration $t$ for line 9. RRAM-DP-SGD is defined under the replacement definition with 2 times the sensitivity of the add-or-remove definition~\cite{dp_addremove}.


\subsection{Pretraining the CiM-based Training}
The program precision from memristor non-idealities causing significant degradation in network performance (e.g., $\geq$60\% accuracy drop on CIFAR10/100~\cite{eapu}) is a major bottleneck for memristor-based training~\cite{ai_accelerator02,pipeline}. While this work rethinks the formal connection between device-realistic noise and theoretical DP-SGD (noisy training), we find an elegant way of borrowing DP-SGD training techniques to improve utility in CiM-training. Due to the noisy privatised mechanism, DP-SGD has limited utility as compared to normal SDG training. One of the scalable methods for improving DP-SGD utility is through pretraining non-privately on massive public data or synthetic data, and fine-tuning with DP-SGD on precious private data~\cite{DPSGD21,DPSGD22,DPSGD23}. We believe more scaling techniques from DP-SGD being migrated to improve analog CiM-training are worth future research exploration despite the differences in calibrating device-realistic noise; and we adopt a pretraining strategy from DP-SGD in this work~\cite{DPSGD22}, such as a larger batch size and group normalization layer, to improve general CiM-training utility.

\section{Experiments}
\subsection{Experimental Setup}
\textbf{Memory Devices.}
The electrical characterization is performed using a test chip fabricated in a 180~nm process, which integrates a 32$\times$32 one-transistor-one-resistor (1T1R) macro. The memory cells comprise TiN as bottom and top electrodes with a Ta$_2$O$_5$/TaO$_x$ dielectric stack. The 1T1R array employs a crossbar topology, with row-shared word lines (WLs) and source lines (SLs) and column-shared bit lines (BLs) for efficient cell addressing.

\textbf{Hybrid Analog-digital System.} We perform hybrid analog-digital training from scratch on MNIST (8$\times$8 resized) with a 2-layer MLP network of 760 weights. Weights are stored in analog RRAM arrays, whereas dynamic parameter updates are implemented in a digital computing unit with
SRAM buffer. The minimum viable testing uses hyper-parameters $\delta$=$10^{-5}$, batch size=2048, clip norm=2, LR=2, [$G_{\min}$, $G_{\max}$]=[20, 80], [$\theta_{\min}$, $\theta_{\max}$]=[-1, 1].

\textbf{Simulation Benchmarks.}
We scale up $\rramdp$-SGD on Opacus~\cite{Opacus} with experimentally acquired RRAM noise statistics on the 32$\times$32 1T1R macro for CIFAR-10, CIFAR-100~\cite{C10_dataset}, STS-B, and SST-2~\cite{glue} datasets. Following~\cite{DPSGD22}, we use pretrained models of WRN-16-4, WRN-28-10~\cite{WRN}, BERT-Base~\cite{BERT}, and RoBERTa-Base~\cite{Roberta}, respectively. Table \ref{tab:training} shows detailed private training hyper-parameters simulated by  NVIDIA RTX 4090 GPU.

\begin{table}[!h]
\vspace{-5pt}
\centering
\caption{Simulation settings for $\rramdp$-SGD}
 \vspace{-10pt}
\resizebox{1\linewidth}{!}{
\Large 
\begin{tabular}{p{3cm}c|c|c|c}
\toprule
\multirow{2}{*}{} & \multicolumn{4}{c}{\textbf{Datasets}} \\ \cline{2-5}
& \cellcolor[HTML]{EFEFEF} \textbf{CIFAR10}  & \cellcolor[HTML]{EFEFEF} \textbf{CIFAR100}  & \cellcolor[HTML]{EFEFEF} \textbf{STS-B} & \cellcolor[HTML]{EFEFEF} \textbf{SST-2} \\ 
\toprule
\toprule
\cellcolor[HTML]{EFEFEF} \textbf{Length (n)} & \cellcolor[HTML]{EFEFEF} \textbf{50k} & \cellcolor[HTML]{EFEFEF} \textbf{50k} & \cellcolor[HTML]{EFEFEF} \textbf{7k} & \cellcolor[HTML]{EFEFEF} \textbf{67k} \\
\cellcolor[HTML]{EFEFEF} $\boldsymbol{\delta}$ & \cellcolor[HTML]{EFEFEF} $\mathbf{10^{-5}}$ & \cellcolor[HTML]{EFEFEF} $\mathbf{10^{-5}}$ & \cellcolor[HTML]{EFEFEF} $\mathbf{10^{-4}}$ & \cellcolor[HTML]{EFEFEF} $\mathbf{10^{-5}}$ \\
Pretraining  & ImageNet32~\cite{Imagenet32} & ImageNet32~\cite{Imagenet32} & Huggingface~\cite{BERT} & Huggingface~\cite{Roberta} \\
Fine-tuning & Entire Model & Entire Model & LoRA(r=2) & LoRA(r=4)  \\
\midrule
\midrule
\cellcolor[HTML]{EFEFEF} \textbf{Batch Size (B)} & \cellcolor[HTML]{EFEFEF} \textbf{2048} & \cellcolor[HTML]{EFEFEF} \textbf{2048} & \cellcolor[HTML]{EFEFEF} \textbf{256} & \cellcolor[HTML]{EFEFEF} \textbf{2048} \\
Clip Norm (C) & 2 & 1 & 1.5 & 2  \\
LR ($\eta$) & 1.5 & 1.5 & 0.5 & 1.5  \\
$\text{[}G_{\min}, G_{\max}\text{]}$  & $\text{[20}, \text{80]}$ & $\text{[20}, \text{80]}$ & $\text{[20}, \text{80]}$ & $\text{[20}, \text{80]}$  \\
$\text{[}\theta_{\min}, \theta_{\max}\text{]}$  & $\text{[-1}, \text{1]}$ & $\text{[-2}, \text{2]}$ & $\text{[-1}, \text{1]}$ & $\text{[-1}, \text{1]}$  \\
\bottomrule
\end{tabular}}
\label{tab:training}
\vspace{-8pt}
\end{table}

\textbf{Hardware Accelerator.} Our hardware accelerator shown in Fig.~\ref{fig:accelerator} borrows a general RRAM-based CiM architecture~\cite{ai_accelerator02} to implement $\rramdp$-SGD. The system core consists of a global control unit and 512 tiles; each tile contains 16$\times$16 processing elements (PEs). Each PE contains three 32$\times$32 RRAM arrays, where a weight value is stored by a 3-device configuration. Along with peripheral circuits such as digital-to-analog converters (DACs), analog-to-digital converters (ADCs), and read/write drivers, it performs energy-efficient in-memory vector-matrix multiplication. In Fig.~\ref{fig:accelerator}, the 3-device configuration shares the same DACs, ADCs, and drivers; and their output currents are summed along the shared SL during readout.

\begin{figure}
    \centering
    {\includegraphics[width=0.95\columnwidth]{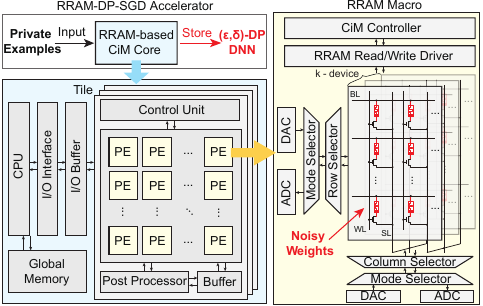}} \\
    \vspace{-8pt}
    \caption{Architecture of an RRAM-based CiM accelerator for $\rramdp$-SGD training, including the system-level framework and the circuit-level configuration of a single PE for $k$ RRAM macros with peripheral circuits.}%
    \label{fig:accelerator}%
    \vspace{-5pt}  
\end{figure}

\textbf{Evaluation Metrics.}
Model utility: We compare test accuracy of $\rramdp$-SGD with non-private SGD training. Minimal performance drop reports good utility. 
Model training cost: We evaluate FLOPS~\cite{calflops} for one iteration.
Hardware performance: We compare energy and time cost over FP32 training on A100 GPU (40GB) at peak 19.5TFLOPS, maximum power 250W; and DiVa-GEMM at peak 29.5TFLOPS, maximum power 21.2W~\cite{diva}. 

\begin{figure}
    \centering
    {\includegraphics[width=0.85\columnwidth]{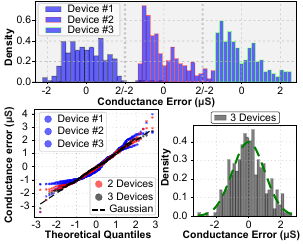}} \\
       \vspace{-10pt}  
    \caption{(TOP) Histogram of 300 samples from 3 independent RRAM devices, sampled at $\Delta G_{i=1,2,3}\sim \text{Dist}(\tau\text{=3}\text{\textmu S},G_t\text{=50}\text{\textmu S})$, respectively. (BOTTOM) Q-Q plot for the averaged sum \& Histogram of the 3-device representation.}
    \label{fig:3-device}%
    \vspace{-15pt}  
\end{figure}

\vspace{-3pt}
\subsection{Robust Gaussian for 3-Device Representation}
We first study the sufficient $k$-device representation to generate Gaussian noise. In Fig~\ref{fig:3-device} (TOP), the graph shows the histogram for 3 single independent RRAM devices sampled at $G_t\text{=50}\text{\textmu S} \text{ and } \tau\text{=3}$\textmu S for 300 times. These 3 distributions follow a similar spread centred around 0\textmu S but a randomly different shape, which presents the independent, non-identically distributed challenge of device-to-device variation for completing a theoretical DP guarantee proof. The bottom left plot shows the Q-Q plot as $k$ increases from 1 to 3 upon the averaged sum of $3Ck$ devices. It is shown that a 3-device representation aligns nearly perfectly with the diagonal Gaussian under our write-verify sampling method with empirical RRAM stochastic characteristics. The bottom right histogram also expresses the normal Gaussian curve comparison, which says the 3-device representation is a good trade-off between tight privacy evaluation and hardware costs of duplicating $k$ RRAM devices. We confirm this Gaussianity condition ($k$=3) by scaling to $\tau$=1\text{\textmu S}, 10\text{\textmu S} and $G_t$=20\text{\textmu S}, 80\text{\textmu S} in our empirical qualitative experiments.

\begin{figure}[!t]
    \centering
    {\includegraphics[width=0.95\columnwidth]{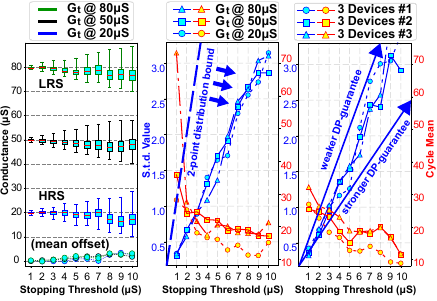}} \\
       \vspace{-10pt}  
    \caption{(LEFT: $\rramdp$ Robustness) Boxplot for the write-verify noise sampling with 3-device representations for a 32$\times$32 crossbar at 20$\text{\textmu S}$, 50$\text{\textmu S}$, 80$\text{\textmu S}$. (MID: $\rramdp$ Sampling) Standard deviation (s.t.d.) and program cycles at 20$\text{\textmu S}$, 50$\text{\textmu S}$, 80$\text{\textmu S}$. (RIGHT: $\rramdp$ Modeling) S.t.d. and program cycles for 3 random 3-device representations at 50$\text{\textmu S}$.}
    \label{fig:linearity}%
    \vspace{-25pt}  
\end{figure}

\vspace{-3pt}
\subsection{RRAM-DP Relaxation on \texorpdfstring{$G_t$}{Gt} \& Device Variation}
In Fig.~\ref{fig:linearity}, we show the stochastic pattern and the correct form of $\rramdp$ implementation for a 32$\times$32 1T1R RRAM crossbar. In Fig.~\ref{fig:linearity} (LEFT), a 3-device representation is sampled across the macro at 20$\text{\textmu S}$, 50$\text{\textmu S}$, 80$\text{\textmu S}$ upon the convergence of the write-verify algorithm. The box-plot shows consistent Gaussian-like properties at different resistance states, such as symmetry and stable interquartile range spread, over varying $\tau$. The mean offset is also consistent across 20$\text{\textmu S}$, 50$\text{\textmu S}$, 80$\text{\textmu S}$, which is negligible when $\tau$ is small. In Fig.~\ref{fig:linearity} (MID), the standard deviation of $\Delta \widetilde{G}_i$$\sim$$\text{Dist}(\tau,G_t)$ is found to have a favourable \emph{\textbf{linear}} relationship with varying $\tau$ bounded by a 2-point distribution. More importantly, this linearity is consistent and \emph{\textbf{independent of all resistance states}}. This implies $\Delta \widetilde{G}_i$$\sim$$\text{Dist}(\tau,G_t)$ can be relaxed to $\Delta \widetilde{G}_i$$\sim$$\text{Dist}(\tau)$ for a simple $\rramdp$ sampling that depends only on $\tau$. In Fig.~\ref{fig:linearity} (RIGHT), we demonstrate the hardware lower bound modeling $\sigma(\Delta \widetilde{G}_\mathrm{LB})$ for $\sigma'_{\min}$ in $\rramdp$ mechanism $\mathcal{R}$. We initialize fixed sampling windows for all devices of 300 empirical error entries at 50$\text{\textmu S}$. Empirical s.t.d. are then estimated over these windows using Chi-Square method ~\cite{LindCLT} ($\alpha$=0.05) for Gaussian s.t.d. lower bound. We show that individual-level sampling with three examples of 3-device representations follows \emph{\textbf{independent linear curves}} with slightly different slopes. The linear overlapping region suggests recalibration is only required if any observed window drops below the maintained lower bound. We also report around 10 to 70 expected cycles per device needed for our write-verify algorithm to converge in red color. Cycles may increase exponentially as $\tau$ approaches from 1 to 0. 

\begin{figure}
    \centering
    {\includegraphics[width=0.95\columnwidth]{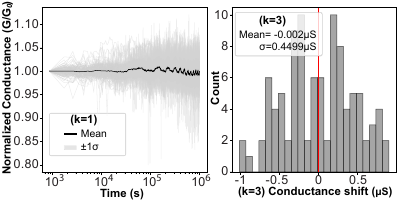}} \\
    \vspace{-10pt}  
    \caption{(LEFT) Device drift from 15 mins to 12 days. (RIGHT) Final equivalent drift error for 3-device representations.}
    \label{fig:drift}%
    \vspace{-10pt}  
\end{figure}

\begin{figure}
    \centering
    {\includegraphics[width=1.\columnwidth]{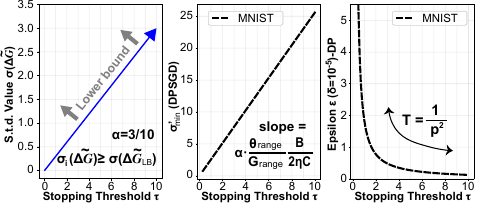}} \\
    \vspace{-8pt}  
    \caption{(LEFT) Hardware lower bound modeling function. (MID) Software equivalent lower bound function $\sigma'_{\min}(\tau)$ on MNIST dataset. (RIGHT) Final equivalent privacy trade-off function $\varepsilon(\tau)$ computed at $\delta$=$10^{-5}$ on MNIST dataset.}
    \label{fig:mnist_model}%
       \vspace{-15pt}  
\end{figure}

\vspace{-3pt}
\subsection{Impact to RRAM-DP from Device Drift}
Device drift in RRAM refers to the gradual change in resistance state over time after programming, typically attributed to the evolution of conductive filamentary microstructure~\cite{rram_noise}. In Fig.~\ref{fig:drift} (LEFT), we perform an experiment to study the effects of individual device drift after $\rramdp$ mechanism under room temperature. Results show that device conductance ($k$=1) normalised to initial states experienced minimal device drift after $10^6$~s, with mean$\approx$0 and spread$\approx\pm$0.05. In Fig.~\ref{fig:drift} (RIGHT), we report the final--initial drift error under 3-device representation ($k$=3) for $10^6$~s. This shows device drift results in an additional Gaussian noise injection with mean$\approx$0$\text{\textmu S}$ and standard deviation$\approx$0.45$\text{\textmu S}$, which is equivalent to $\rramdp$ modeling at $\tau$=1.5$\text{\textmu S}$ from Fig.~\ref{fig:linearity} (RIGHT). This suggests that device drift is introducing an additional one-shot noise offset to $\rramdp$ mechanism $\mathcal{R}$ at the quasi-saturation level over time. While drift error is small to be considered for one-shot storage mechanism $\mathcal{R}$, it becomes negligible to $\rramdp$-SGD as iteration $T$ is large enough for the accumulative program error $\gg$ drift error. From the privacy perspective, the equivalent per-step drift noise in $\rramdp$-SGD provides negligible additional privacy amplification effect over fixed gradient sensitivity as $T$ increases.  

\vspace{-3pt}
\subsection{Hybrid Analog-digital Training}
To validate device-calibrated DP, we perform hybrid $\rramdp$-SGD training on MNIST dataset, where digital update signals are translated into programming pulses to modulate RRAM devices, enabling in-situ ($k$=3) analog weight adaptation.

\textbf{RRAM-DP Modeling.} In Fig.~\ref{fig:mnist_model} (LEFT), we first model the linear hardware lower bound function $\sigma(\Delta \widetilde{G}_\mathrm{LB})$ initialized by Fig.~\ref{fig:linearity} (RIGHT) at slope $\alpha$=3/10. At runtime, the sampling window remove oldest error sample and append new final read-out error at each training iteration. In Fig.~\ref{fig:mnist_model} (MID), we transform the hardware lower bound into equivalent software lower bound function $\sigma'_{\min}$ that is scaled by MNIST dataset sensitivity, such as batch size and learning rate, for hardware $\tau$ as a control variable. In Fig.~\ref{fig:mnist_model} (RIGHT), we further compute the ($\varepsilon,\delta$)-DP form by transforming $\sigma'_{\min}$ into $\mu_r$-GDP using $T=1/p^2$ as the correct convergence scaling from $\rramdp$-SGD in theorem 3.2 and ($\varepsilon,\delta$)-DP conversion in theorem 2.1. Thus, by choosing any write-verify control variable $\tau$, we get the desirable $\rramdp$-SGD privacy level calibrated for MNIST. 

\vspace{2pt}
\textbf{RRAM-DP-SGD Training.} In Fig.~\ref{fig:mnist_training} (LEFT), we perform hybrid analog-digital training with $\rramdp$-SGD and compare the actual hardware noise with simulation noise, DP-SGD training, and normal training using a clipping-free and noise-free optimizer. While the normal training is capped at $\sim$90\% accuracy, we train privately using noise calibrated at $\varepsilon$=0.1. First, DP-SGD uses the standard $\mu$-GDP accountant~\cite{gaussDP} to estimate isotropic Gaussian noise injection. Second, $\rramdp$-SGD Hardware uses $\tau\approx$10 from Fig.~\ref{fig:mnist_model} (RIGHT) to inject realistic RRAM noise at $\varepsilon$=0.1. Third, $\rramdp$-SGD Simulation uses $\tau$=10 with empirical s.t.d. collected for all devices in Fig.~\ref{fig:linearity} (RIGHT) to inject diagonal Gaussian noise with a fixed s.t.d. lookup table. Results show that $\rramdp$-SGD simulation closely mimics the $\rramdp$-SGD hardware training behaviour, both achieving accuracy at $\sim$80\%. This also shows $k$=3 closely approximates the theoretical Gaussian noise. Meanwhile, DP-SGD achieves similar accuracy of $\sim$82\% at the same theoretical privacy level, which proves the tight hardware estimation with $\tau\approx$10.

\vspace{2pt}
\textbf{Training Program Cycles.} In Fig.~\ref{fig:mnist_training} (RIGHT), we record the total programming cycles of $\rramdp$-SGD along the hardware training iterations. Given 760 weights in the MLP network, roughly 21000--23000 cycles are performed in each iteration. Consider the 3-device representation, we randomly sample 3 devices for the same weight to program, where each device is expected to program 23000/(760*3)$\approx$10.1 times. We find this result closely aligns with program cycles shown in $\rramdp$ Sampling in Fig.~\ref{fig:linearity} when $\tau$=10. 

\begin{figure}[!t]
    \centering
    {\includegraphics[width=1\columnwidth]{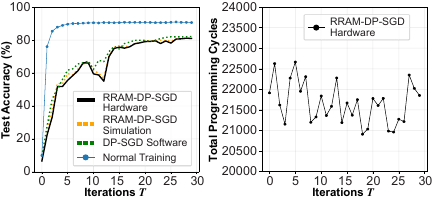}} \\
       \vspace{-10pt}  
    \caption{Experimental results for $\rramdp$-SGD hardware training ($\varepsilon$=0.1) as compared to DP-SGD, normal training.}
    \label{fig:mnist_training}%
    \vspace{-15pt}  
\end{figure}

\vspace{-2pt}
\subsection{Privacy Protection against MIAs}
In Fig.~\ref{fig:attack}, we identify the privacy protection power between normal training without privacy-preserving techniques, DP-SGD, and $\rramdp$-SGD hardware all trained in Fig.~\ref{fig:mnist_training}. For the MIA attack, we use Likelihood Ratio Attack (LiRA)~\cite{MIA01} with 256 shadow models (N=256) performed under 10,000 balanced attack samples to capture membership and non-membership training distribution on the resized 8$\times$8 MNIST dataset. The effectiveness of the LiRA attack can be evaluated from two perspectives: (1) an area under the ROC curve (AUC) close to 0.5, indicating near-random discrimination capability; and (2) a low true positive rate (TPR) at low false positive rates (FPR), reflecting limited statistical power when the significance level is small. As shown in Fig.~\ref{fig:attack}, normal training exhibits mild privacy leakage (AUC = 0.5348). In contrast, both DP-SGD and $\rramdp$-SGD achieve AUC values close to 0.5 and suppressed TPR in the low-FPR regime. This confirms the strong protection power of utilizing device-realistic noise for DP private learning.

\begin{figure}
    \centering
    {\includegraphics[width=1\columnwidth]{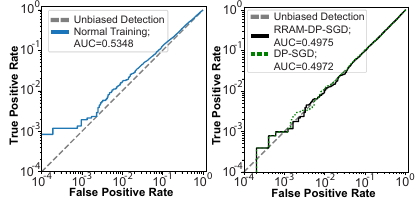}} \\
    \vspace{-10pt}  
    \caption{ROC curves of MIA (LiRA (N=256)) against normal training, DP-SGD, and $\rramdp$-SGD hardware training calibrated at ($\varepsilon$=0.1, $\delta$=$\mathbf{10^{-5}}$)-DP protection level.}
    \label{fig:attack}%
    \vspace{-7pt}
\end{figure}

\begin{figure}
    \centering
    {\includegraphics[width=1\columnwidth]{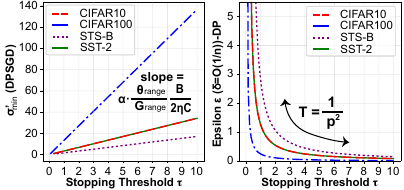}} \\
    \vspace{-7pt}  
    \caption{(LEFT) Software equivalent lower bound function $\sigma'_{\min}(\tau)$ on vision/language datasets. (RIGHT) Final equivalent privacy trade-off function $\varepsilon(\tau)$ computed at $\delta$=O(1/n) on vision/language datasets.}
    \label{fig:rram_dp}%
    \vspace{-10pt}  
\end{figure}

\vspace{-4pt}
\subsection{Unlocking CiM-based Training with Privacy and Pretraining Techniques}
\textbf{RRAM-DP Modeling.} To fully utilize noise of \textit{in situ} CiM-training, we scale up $\rramdp$-SGD training from MNIST to CIFAR10/100 (vision) and STS-B/SST-2 (language) tasks. Similarly, we identify the same linear lower bound model of $\sigma(\Delta \widetilde{G}_\mathrm{LB})$ = $\alpha\cdot\tau$ with slope $\alpha$ = 3/10 derived from Fig.~\ref{fig:linearity} to estimate $\sigma'_{\min}$ for computing the composition of $\mu_r$ by theorem~\ref{thm:DPSGD}. Here, a steeper slope of lower bound $\sigma'_{\min}$ refers to a stricter privacy guarantee at the same hardware noise injection as dataset-specific training hyper-parameters scale the gradient sensitivity. Then, we formulate $\varepsilon$ ($\delta$ follows Table~\ref{tab:training}) against $\tau$ relationship in Fig.~\ref{fig:rram_dp} (RIGHT) to calibrate target privacy ($\varepsilon,\delta$)-DP at some $\tau$ for different dataset training configurations. It is observed that the flatter slope in Fig.~\ref{fig:rram_dp} (LEFT) has easier privacy control in Fig.~\ref{fig:rram_dp} (RIGHT) with $\rramdp$-SGD over four datasets. 

\vspace{2pt}
\textbf{Pretraining Effects.} Usually, a larger dataset length or a dataset with more classes, such as CIFAR100, requires a richer visual pretraining dataset and larger batch size to improve utility, where the later one causes the steep lower bound $\sigma'_{\min}$ slope to be more difficultly compared at the same $\varepsilon$ across datasets. Here, we adopt the same convergence scaling of $T$=$1/p^2$ from theorem \ref{thm:DPSGD} by fixing the batch size to compute the iterations $T$ needed for training and replacing batch normalization layers with group normalization for per-sample clipping. Upon these setups, we simulate the CiM noise in the same way from Fig.\ref{fig:linearity} and Fig.\ref{fig:mnist_training}  for privately training CIFAR10, CIFAR100, STS-B, and SST-2 over the deployed pretraining model in Fig.~\ref{fig:training}. When $\varepsilon$ = 5, their accuracy is equal to 89.6\%, 71.8\%, 83.4\%, and 92.0\%, which is comparable to their non-private baseline training from scratch~\cite{WRN} and pretrained model~\cite{BERT,Roberta}, and better than state-of-the-art DP-SGD training from scratch \cite{DPSGD21,DPSGD22,DPSGD23}. This suggests pretraining technique is key to high utility CiM-based training with better tolerance against noise accumulation.  

\begin{figure}
    \centering
{\includegraphics[width=1.0\columnwidth]{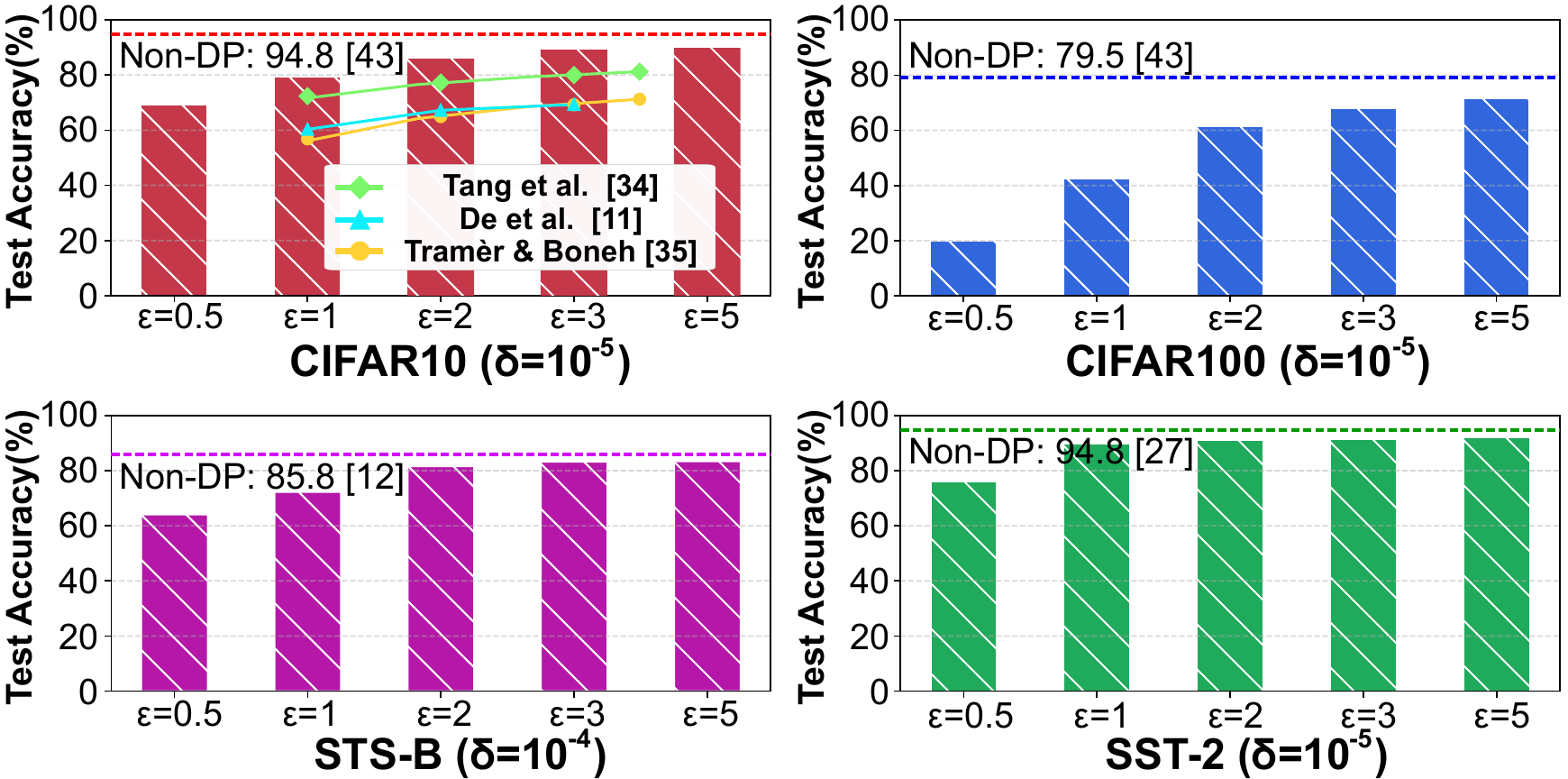}} \\
    \vspace{-7pt}  
    \caption{Experimental results for $\rramdp$-SGD with pretraining techniques ($\varepsilon$=2) as compared to non-private training (Non-DP) and DP-SGD from scratch.}
    \label{fig:training}%
    \vspace{-15pt}  
\end{figure}

\vspace{-5pt}  
\subsection{High-Performance RRAM-DP-SGD}
\textbf{Energy Saving.}
In Fig.~\ref{fig:energy}, energy savings are estimated by normalizing A100 energy consumption under theoretical limit in sec. 4.1 setting. The system operates at a 500MHz clock frequency. For in-memory computing, 512 RRAM crossbar tiles (32$\times$32 weight array in 3-device config.) execute 512$\times$32$\times$32 parallel MAC operations, with energy per MAC estimated at 0.447pJ ($\approx$FLOPS/2), including peripheral contributions (ADC: 72.3fJ, DAC: 46.3fJ, op-amp: 0.51pJ, RRAM driver: 0.27pJ)\cite{adc,dac,cai2020power}. RRAM inference uses a 0.2V read voltage with 50$\text{\textmu S}$ average conductance, while programming employs SET/RESET voltages of 1.2V/2.2V, with row-parallel programming 512$\times$32 weights executing at 29.2pJ per device per cycle. For the $\rramdp$-SGD accelerator at ($\varepsilon$=2, $\delta$=O(1/n))-DP, we achieve 54$\times$-57$\times$ energy reduction against A100 and 3$\times$-3.2$\times$ against DiVa-GEMM~\cite{diva}. Despite RRAM write overheads, the main energy cost (94\%-99\%) arises from per-sample gradient batch computations, with smaller batches benefiting less from energy savings.

\vspace{2pt}
\textbf{Time Speedup.}
In Fig.~\ref{fig:time}, speedup is similarly estimated by normalizing A100 computation time under theoretical limit in sec. 4.1 setting. The system leverages massive MAC parallelism through in-memory computing with 512$\times$32$\times$32 weight operations per MAC step and row-parallel programming supporting 512$\times$32 weights in one operation. Each programming cycle takes 40 ns, incorporating the time for SET/RESET and readout operations under the specified programming scheme. Each MAC operation takes 20 ns, with the overall execution highly optimized by the 500 MHz clock and parallel tile architecture.
We achieve 2.5$\times$-2.7$\times$ speedup compared to A100 and 1.7$\times$-1.8$\times$ compared to DiVa-GEMM~\cite{diva} at ($\varepsilon$=2, $\delta$=O(1/n))-DP, with major time consumed in MAC (94\%-99\%) compared to RRAM program time. This demonstrates the promising in-memory computing paradigm with privacy guarantees.

\textbf{Discussion on SOTA privacy accelerators.}
DiVa~\cite{diva} is the first architecture design dedicated to DP training, replacing the conventional weight-stationary systolic GEMM array with an outer-product dataflow engine to better accommodate the irregular, small inner-dimension matrix multiplications associated with per-sample gradient computation. However, DiVa remains a fully digital implementation with conventional memory hierarchies for MAC operations. Although
cryptographically safe pseudo-random number generator (CSPRNG) addresses floating-point error for digital accelerators, it renders additional latency and energy overhead (see secure mode in Opacus~\cite{Opacus} and Jax-privacy~\cite{jax}).
\begin{figure}[!t]
    \centering
    {\includegraphics[width=0.99\columnwidth]{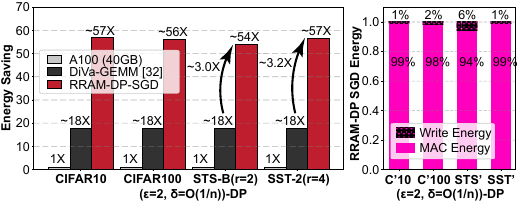}} \\
    \vspace{-10pt}  
    \caption{(LEFT) Experimental results for hardware energy savings as compared to GPU (A100) and DiVa-GEMM. (RIGHT) Energy consumption for RRAM. }
    \label{fig:energy}%
    \vspace{-10pt}  
\end{figure}

\begin{figure}[!t]
    \centering
    {\includegraphics[width=0.99\columnwidth]{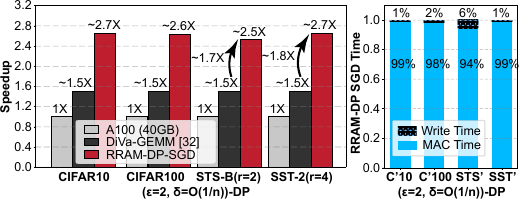}} \\
    \vspace{-10pt}  
    \caption{(LEFT) Experimental results for hardware speedup as compared to GPU (A100) and DiVa-GEMM. (RIGHT) Time consumption for RRAM. }
    \label{fig:time}%
    \vspace{-15pt}  
\end{figure}

\vspace{-5pt}
\section{Conclusion}
Former works assume idealized noise generation in hardware for privacy applications, which is not always true in real device statistics. In this work, we propose $\rramdp$ co-design to connect real RRAM device randomness for private on-device data storage and in-memory SGD with formal $(\varepsilon, \delta)$-DP privacy definition, with theoretical proof under $\mu$-GDP analysis framework. Our design identifies a robust RRAM noise sampling method with a single control variable, validated through empirical device statistics to inject sufficient $\rramdp$ mechanism noise for privacy protection. Additionally, we show that the pretraining technique inspired by DP-SGD can mitigate CiM-induced noisy learning, enabling near-lossless accurate CiM-training as compared to normal training. Our $\rramdp$-SGD accelerator establishes a novel computing paradigm for both efficient and private AIoT edge learning.

\vspace{-5pt}
\section{Acknowledgement}
This research was conducted by ACCESS – AI Chip Center for Emerging Smart Systems, supported by the InnoHK initiative of the Innovation and Technology Commission of the Hong Kong Special Administrative Region Government; supported by Hong Kong Research Grant Council - General Research Fund Scheme (Grant No. 17202422, 17212923, 17215025) Theme-based Research (Grant No.T45-701/22-R), and Strategic Topics Grant (Grant No.STG3/E-605/25-N).

\bibliographystyle{ACM-Reference-Format}
\bibliography{references}

@article{privacybook,
    author = {Dwork, Cynthia and Roth, Aaron},
    title = {The Algorithmic Foundations of Differential Privacy},
    year = {2014},
    journal = {Foundations and Trends in Theoretical Computer Science},
}

@article{Neyman_Pearson,
    author = {E.L. Lehmann, Joseph P. Romano},
    title = {Testing Statistical Hypotheses}, 
    journal = {Springer Texts in Statistics}, 
    year = {2022}, 
}

@inproceedings{sensitivity,
    author = {Dwork, Cynthia and others},
    title = {Calibrating Noise to Sensitivity in Private Data Analysis},
    year = {2006},
    booktitle = {Proceedings of Conference on Theory of Cryptography},
    series = {TCC '06}
}

@article{gaussDP,
    author = {Dong, Jinshuo and others},
    title = {Gaussian Differential Privacy},
    journal = {Journal of the Royal Statistical Society Series B: Statistical Methodology},
    year = {2022},
}

@article{gaussDP_Deep,
	author = {Bu, Zhiqi and others},
	journal = {Harvard Data Science Review},
	year = {2020},
	title = {Deep {Learning} {With} {Gaussian} {Differential} {Privacy}},
}

@inproceedings{dpsgd,
    title = {Deep Learning with Differential Privacy},
    author = {Abadi, Martin and others},
    booktitle = {Proceedings of Conference on Computer and Communications Security},
    year = {2016},
    series = {CCS '16},
}

@INPROCEEDINGS{linkage_attack01,
  author={Narayanan, Arvind and Shmatikov, Vitaly},
  booktitle={Proceedings of the IEEE Symposium on Security and Privacy}, 
  title={Robust De-anonymization of Large Sparse Datasets}, 
  year={2008},
  series={SP '08}
}

@inproceedings{floatingpt_attack01,
    author = {Mironov, Ilya},
    title = {On Significance of the Least Significant Bits for Differential Privacy},
    year = {2012},
    booktitle = {Proceedings of the ACM Conference on Computer and Communications Security}, 
    series = {CCS '12}
}

@inproceedings{floatingpt_attack02,
    author = {Jin, Jiankai and others},
    title = {Are We There Yet? Timing and Floating-Point Attacks on Differential Privacy Systems},
    year = {2022},
    booktitle = {Proceedings of the IEEE Symposium on Security and Privacy},
    series={SP '22}
}

@INPROCEEDINGS{MIA01,
    author={Carlini, Nicholas and others},
    booktitle={Proceedings of the IEEE Symposium on Security and Privacy}, 
    title={Membership Inference Attacks From First Principles}, 
    year={2022},
    series={SP '22}
}

@inproceedings{Model_inversion,
    author = {Fredrikson, Matt and others},
    title = {Model Inversion Attacks that Exploit Confidence Information and Basic Countermeasures},
    year = {2015},
    booktitle = {Proceedings of the ACM SIGSAC Conference on Computer and Communications Security},
    series={CCS '15}
}

@inproceedings{MIA_database,
    author = {Anderson, Maya and others},
    title = {Is My Data in Your Retrieval Database? Membership Inference Attacks Against Retrieval Augmented Generation},
    year = {2025},
    booktitle = {Proceedings of the International Conference on Information Systems Security and Privacy},
    series = {ICISSP '25}
}

@inproceedings{data_extract_LLM,
    author = {Carlini, Nicholas and others},
    title = {Extracting Training Data from Large Language Models},
    booktitle = {Proceedings of the USENIX Conference on Security Symposium},
    year = {2021},
    series = {SEC '21}
}

@inproceedings{data_extract_Diffusion,
    author = {Carlini, Nicholas and others},
    title = {Extracting Training Data from Diffusion Models},
    year = {2023},
    booktitle = {Proceedings of the USENIX Conference on Security Symposium},
    series = {SEC '23}
}

@article{AIoT,
    title = {{AIoT}-enabled smart surveillance for personal data digitalization: Contextual personalization-privacy paradox in smart home},
    journal = {Information \& Management},
    year = {2023},
    author = {Fengjiao Zhang and Zhao Pan and Yaobin Lu},
}

@inproceedings{LDP01,
    author = {Wang, Ning and others},
    booktitle = {Proceedings of the IEEE International Conference on Data Engineering},
    title = {Collecting and Analyzing Multidimensional Data with Local Differential Privacy},
    year = {2019},
    series={ICDE '19}
    
}

@article{RRAM_01,
    author = {Wang, Zhongrui and others},
    journal = {Nature reviews materials},
    title = {Resistive switching materials for information processing},
    year = {2020},
}

@article{ai_accelerator02,
  author={Wang, Zhongrui and others},
  journal={Nature Machine Intelligence}, 
  title={In situ training of feed-forward and recurrent convolutional memristor networks}, 
  year = {2019},
}

@inproceedings{snngx,
    author = {Wong, Kwunhang and others},
    title = {SNNGX: Securing Spiking Neural Networks with Genetic XOR Encryption on RRAM-based Neuromorphic Accelerator},
    year = {2025},
    booktitle = {Proceedings of the IEEE/ACM International Conference on Computer-Aided Design},
    series = {ICCAD '24}
}

@ARTICLE{Memristor_DP,
  author={Fu, Jingyan and others},
  journal={IEEE Internet of Things Journal}, 
  title={Memristor-Based Variation-Enabled Differentially Private Learning Systems for Edge Computing in {IoT}}, 
  year={2021},
}

@inproceedings{DPSGD23,
    author = {Tang, Xinyu and others},
    title = {Differentially Private Image Classification by Learning Priors from Random Processes},
    year = {2023},
    booktitle = {Proceedings of the International Conference on Neural Information Processing Systems},
    series = {NIPS '23}
}

@inproceedings{DPSGD22,
    author = {De, Soham and others},
    title = {Unlocking High-Accuracy Differentially Private Image Classification through Scale},
    year = {2022},
    booktitle = {Proceedings of the ICML Workshop on Theory and Practice of Differential Privacy},
}

@inproceedings{DPSGD21,
    author = {Tramèr, Florian and Boneh, Dan},
    title = {Differentially Private Learning Needs Better Features (or Much More Data)},
    booktitle = {Proceedings of the International Conference on Learning Representations},
    year = {2021},
    series = {ICLR '21}
}

@article{rram_noise,
  title={Resistive random access memory {(RRAM)}: An overview of materials, switching mechanism, performance, multilevel cell {(MLC)} storage, modeling, and applications},
  author={Zahoor, Furqan and Azni Zulkifli, Tun Zainal and Khanday, Farooq Ahmad},
  journal={Nanoscale research letters},
  year={2020},
}

@article{rram_program,
  title={Accurate Program/Verify Schemes of Resistive Switching Memory {(RRAM)} for In-Memory Neural Network Circuits},
  author={Milo, Valerio and others},
  journal={IEEE Transactions on Electron Devices},
  year={2021},
}

@article{ielmini2025resistive,
  title={Resistive switching random-access memory {(RRAM)}: Applications and requirements for memory and computing},
  author={Ielmini, Daniele and Pedretti, Giacomo},
  journal={Chemical Reviews},
  year={2025},
}

@misc{C10_dataset,
    title= {{CIFAR-10} and {CIFAR-100}},
    author= {Alex Krizhevsky and Vinod Nair and Geoffrey Hinton},
    howpublished={http://www.cs.toronto.edu/$\sim$kriz/cifar.html},
    year= {2009},
}

@inproceedings{glue,
    title = "{GLUE}: A Multi-Task Benchmark and Analysis Platform for Natural Language Understanding",
    author = "Wang, Alex  and others",
    booktitle = "Proceedings of the {EMNLP} Workshop on {B}lackbox{NLP}: Analyzing and Interpreting Neural Networks for {NLP}",
    year = "2018",
}

@article{Imagenet32,
      title={A Downsampled Variant of ImageNet as an Alternative to the CIFAR datasets}, 
      author={Patryk Chrabaszcz and Ilya Loshchilov and Frank Hutter},
      year={2017},
      journal={arXiv preprint 1707.08819},
}

@inproceedings{Opacus,
    title={Opacus: User-Friendly Differential Privacy Library in PyTorch},
    author={Ashkan Yousefpour and others},
    booktitle={Proceedings of the NeurIPS Workshop on Privacy in Machine Learning},
    year={2021},
}

@article{Jax,
      title={JAX-Privacy: A library for differentially private machine learning}, 
      author={Ryan McKenna and others},
      year={2026},
      journal={arXiv preprint 2602.17861},
}

@article{WRN,
      title={Wide Residual Networks}, 
      author={Sergey Zagoruyko and Nikos Komodakis},
      year={2017},
      journal={arXiv preprint 1605.07146},
}

@inproceedings{BERT,
    title = "{BERT}: Pre-training of Deep Bidirectional Transformers for Language Understanding",
    author = "Devlin, Jacob  and others",
    booktitle = "Proceedings of the Conference of the North {A}merican Chapter of the Association for Computational Linguistics: Human Language Technologies",
    year = "2019",
    series = {NAACL '19}
}

@article{Roberta,
    title={RoBERTa: A Robustly Optimized BERT Pretraining Approach},
    author={Yinhan Liu and others},
    year={2019},
    journal={arXiv preprint 1907.11692},
}

@inproceedings{diva,
  author={Park, Beomsik and others},
  booktitle={Proceedings of the IEEE/ACM International Symposium on Microarchitecture}, 
  title={DiVa: An Accelerator for Differentially Private Machine Learning}, 
  year={2022},
  series={MICRO '22}
}

@misc{calflops,
  author = {Ye, Xiaoju},
  title = {Calflops: a FLOPs and Params calculate tool for neural networks in pytorch framework},
  year = 2023,
  howpublished ={https://github.com/MrYxJ/calculate-flops.pytorch},
}

@article{LindCLT,
  author    = {Achim Klenke},
  title     = {Probability Theory: A Comprehensive Course},
  year      = {2014},
  journal = {Universitext}, 
}

@inproceedings{adc,
  title={A 67.51 dB SNDR 137.8 $\mu$W SAR-Assisted Slope ADC with High-Linearity Analog Slope and Low Switching Energy},
  author={Chu, Xiaofeng and Liang, Can and Cai, Zeyu},
  booktitle={Proceedings of the IEEE International Symposium on Circuits and Systems},
  year={2025},
  series={ISCAS '25}
}

@inproceedings{dac,
  title={An energy efficient 7.59-ENOB 50 MS/s flash-SAR ADC in 65-nm CMOS},
  author={Lee, Sanghyun and Kim, Youngmin},
  booktitle={Proceedings of the IEEE International Midwest Symposium on Circuits and Systems},
  year={2023},
  series={MWSCAS '23}
}

@article{cai2020power,
  title={Power-efficient combinatorial optimization using intrinsic noise in memristor Hopfield neural networks},
  author={Cai, Fuxi and others},
  journal={Nature Electronics},
  year={2020},
}

@inproceedings{bookkeeping_clip,
    author = {Bu, Zhiqi and others},
    title = {Differentially Private Optimization on Large Model at Small Cost},
    year = {2023},
    booktitle = {Proceedings of the 40th International Conference on Machine Learning},
    series = {ICML '23}
}

@article{eapu,
    author = {Liu, Jinchang and others},
    title = {Error-aware probabilistic training for memristive neural networks},
    year = {2025},
    journal = {Nature Communications}
}

@inproceedings{pipeline,
    author = {Haoxiong Ren and others},
    title = {When Pipelined In-Memory Accelerators Meet Spiking Direct Feedback Alignment: A Co-Design for Neuromorphic Edge Computing},
    booktitle = {Proceedings of the IEEE/ACM International Conference on Computer-Aided Design},
    year = {2025},
    series = {ICCAD '25}
}

@InProceedings{dp_addremove,
  title = {Practical and Private (Deep) Learning Without Sampling or Shuffling},
  author = {Kairouz, Peter and others},
  booktitle = {Proceedings of the 38th International Conference on Machine Learning},
  year = {2021},
  series = {ICML '21}
}

@inproceedings{choose_epsi,
  title={Differential Privacy: An Economic Method for Choosing Epsilon},
  author={Justin Hsu and others},
  booktitle={Proceedings of the IEEE 27th Computer Security Foundations Symposium},
  year={2014},
  series = {CSF '14}
}

\end{document}